\documentclass[conference]{IEEEtran}
\IEEEoverridecommandlockouts

\usepackage{cite}
\usepackage{algorithmic}
\usepackage{graphicx}
\usepackage{textcomp}
\usepackage{xcolor}

\newcommand{\bs}[1]{\boldsymbol{#1}} 
\newcommand{\mcl}[1]{\mathcal{#1}}

\makeatletter   
\newcommand{\vast}{\bBigg@{3}}
\newcommand{\Vast}{\bBigg@{4}}
\makeatother

\newcommand{\msf}[1]{\mathsf{#1}}

\usepackage[utf8]{inputenc}
\usepackage[T1]{fontenc}
\usepackage{url}
\usepackage{tikz}
\usepackage{pgfplots} 
\usepgfplotslibrary{groupplots} 
\usepgfplotslibrary{dateplot}   
\pgfplotsset{compat=newest}
\usepackage{pgf}  
\usepackage{ifthen}  
\usepackage{cite}
\usepackage[cmex10]{amsmath} 
\usepackage{amsfonts, amssymb} 
\usepackage[version=4]{mhchem}
\usepackage{graphicx, amsthm, float, dsfont,color}
\usepackage[utf8]{inputenc} 
\usepackage[T1]{fontenc}
\usepackage{url}
\usepackage{ifthen}
\usepackage{cite}

\newtheorem{theorem}{Theorem} 

\newtheorem{corollary}{Corollary}  

\newtheorem{proposition}{Proposition}

\newtheorem{remark}{Remark}

\begin{document}

\title{Universal Decoding for Unequal Error Protection}

\author{%
  \IEEEauthorblockN{Seyed AmirPouya Moeini}
  \IEEEauthorblockA{University of Cambridge\\
    \texttt{sam297@cam.ac.uk}}
  \and
  \IEEEauthorblockN{Albert Guill\'en i F\`abregas}
  \IEEEauthorblockA{University of Cambridge\\
  Universitat Polit\`ecnica de Catalunya\\
    \texttt{guillen@ieee.org}}
  \thanks{This research was supported in part by the European Research Council under Grants 101142747 and 101158232, the Spanish Government under Grants PID2020-116683GB-C22, PID2021-128373OB-I00 and PID2025-167766OA-C22.}
}

\maketitle

\begin{abstract}
	We study universal unequal error protection of a single special message over discrete memoryless channels.
	We construct a sequence of codes with a decoding rule that does not depend on the channel transition law.
	We derive an achievable error exponent for the special message while ensuring reliable decoding of the ordinary messages.
	The resulting exponent strictly improves on the natural equal-error-protection random-coding baseline and remains positive as the rate approaches capacity.
	For symmetric channels, it converges to the optimal red-alert exponent at capacity.
\end{abstract}


\section{Introduction}

Borade et al.~\cite{BNZ2009} studied the \emph{message-wise unequal error protection} (UEP) problem, in which a single message is designated as special and requires stronger protection than the remaining messages. 
This setting arises naturally in wireless networks, interactive communication, and control systems.

Let $W:\mcl{X}\to\mcl{Y}$ be a discrete memoryless channel (DMC) with finite input and output alphabets.
Consider a length-$n$ block code with message set $\mcl{M}_n=\{1,2,\ldots,\mathrm{M}_n\}$ where $\mathrm{M}_n=e^{nR}$ and $R>0$ is the transmission rate.
The messages are assumed to be equiprobable. 
Without loss of generality, the first message, $M=1$, is designated as the special message, while the messages in $\{2,\ldots,\mathrm{M}_n\}$ are referred to as ordinary messages.
Given the channel output, the decoder produces an estimate $\hat M$ of the transmitted message.


We require the average error probability of the ordinary messages to vanish,
\begin{equation}
  {p}_{\mathrm o}
 \triangleq \frac{1}{\mathrm{M}_n-1}\sum_{m=2}^{\mathrm{M}_n}
 \mathbb{P}\big[\hat M\neq m\,|\, M=m\big]
 \longrightarrow 0,
 \label{eq:ordinary-criterion}
\end{equation}
and we seek a large error exponent for the special message,
\begin{equation}
	 E_{\mathrm s}\triangleq
	 \liminf_{n\to\infty}-\frac1n 
	 \log\mathbb{P}\big[\hat M\neq1\,|\, M=1\big].
	 \label{eq:special-exponent}
\end{equation}

For rates approaching capacity $C$, Borade et al.~\cite{BNZ2009} showed that the optimal special-message exponent is the \emph{red-alert} exponent
\begin{align}
	\mcl{J}^{\star}
	\triangleq
	\max_{x\in\mcl{X}}\,
	D\!\left(P_Y^\star\middle\|W(\cdot|x)\right),
\end{align}
where $P_Y^\star$ is the capacity-achieving output distribution. 
At rates strictly below capacity, a larger exponent is achievable~\cite[Ch.~5]{boradethesis} (see also~\cite{WCCW2012}).
More precisely, the optimal achievable special-message exponent at rate $R$ is
\begin{align}
	\!\!\!\!\mcl{J}(R) \triangleq \!\!\!\!\!\! 
	\sup_{\substack{
		\alpha\in\mcl{P}(\mcl{X}),\,
		\{P_X^x\}_{x\in\mcl{X}}:\\
		\sum_{x}\!\alpha(x)I(P_X^xW)>R}}
	\sum_{x\in\mcl{X}}
	\alpha(x)
	D\!\left( (P_X^xW)_Y \| W(\cdot|x) \right),
	\label{eq:matched-exponent}
\end{align}
{where $(P_X^xW)_Y$ denotes the output distribution induced by $P_X^xW$.}
The corresponding problem for the additive white Gaussian noise channel was studied in~\cite{NazerShkelDraper}.  
All of these constructions require knowledge of the channel.
Such knowledge may be unavailable or difficult to obtain accurately in many applications. 
A universal decoder does not require knowledge of the channel transition law, and uses the same decoding rule for all DMCs with given input and output alphabets.
The rates and error exponents achievable with universal decoding have been studied extensively~\cite{csiszarkorner,FederLapidoth1998}.

In this work, we study unequal error protection when neither the encoder nor the decoder knows the channel.
To the best of our knowledge, this problem has not previously been studied in the universal setting.
We propose a code construction together with a decoding rule that does not depend on the channel transition probabilities.
We derive an achievable error exponent for the special message that is strictly larger than the random-coding exponent under equal error protection.
This exponent remains positive as the rate approaches capacity, while the average error probability of the ordinary messages vanishes.

%

\emph{Notation:} 
Scalar random variables are denoted by uppercase letters, their sample values by lowercase letters, and their alphabets by calligraphic letters. Random vectors are represented in boldface.
For positive sequences $f_n$ and $g_n$, we write $f_n\doteq g_n$ if $\lim_{n\to\infty}\frac1n\log(f_n/g_n)=0$, and $f_n\,\dot\leq\, g_n$ with the analogous $\limsup$ inequality. 
The set of probability distributions on a finite set $\mathcal{A}$ is denoted by $\mathcal{P}(\mathcal{A})$.
The type of a sequence $\bs{x}$ is its empirical distribution, denoted by $\hat{P}_{\bs x}$, and the joint type $\hat{P}_{\bs x\bs y}$ is defined analogously. 
For $P \in \mcl{P}(\mcl{X})$, the type class $\mcl{T}^n(P)$ consists of all sequences in $\mcl{X}^n$ with type $P$.
For a joint distribution $P$, we write $I(P)$ for its mutual information, and $[t]^+\triangleq\max\{t,0\}$.

\section{Code Construction}

For DMCs, the constant-composition random-coding exponent is universally
achievable using maximum mutual information (MMI) decoding~\cite[Ch.~10]{csiszarkorner} (see also~\cite{moeini2025-1}). 
A simple approach is therefore to use a constant-composition code with MMI decoding and designate one of its messages as special.
This guarantees the random-coding exponent for the special message, but provides no additional protection compared with the ordinary messages. Moreover, this exponent vanishes as the rate approaches capacity.

We seek a construction that allows the special message to be
distinguished from the ordinary messages without knowing the channel.
The idea is to divide every codeword into subblocks.
For the ordinary messages, all subblocks have the same composition.
For the special message, each subblock consists of repetitions of a different input symbol.

Without loss of generality, let $\mcl{X}=\{1,\ldots,K\}$.
Fix $Q,\alpha\in\mcl{P}(\mcl{X})$.
The distribution $\alpha$ determines the subblock lengths, while $Q$ specifies the composition of each ordinary subblock.
We use the same partition for every codeword: the subblock indexed by $x\in\mcl{X}$ has length $n_x=n\alpha(x)$.
Subblocks with $\alpha(x)=0$ are omitted.
For simplicity, we assume that $n_x$ and $n_xQ(a)$ are integers for all $x,a\in\mcl{X}$. Standard type approximation removes this restriction without affecting the error exponents.

\emph{Special message:}
In subblock $x$, we repeat the symbol $x$ exactly $n_x$ times.
The special codeword is therefore
\begin{equation}
  \bs x_1=
  \big(
    \underbrace{1,...,1}_{n_1},
    \ldots,
    \underbrace{K,...,K}_{n_K}
  \big).
  \label{eq:special-codeword}
\end{equation}

\emph{Ordinary messages:}
For each ordinary message, subblock $x\in\mcl{X}$ is drawn uniformly
from the type class $\mcl{T}^{n_x}(Q)$, independently across subblocks
and messages.
Thus, the ordinary codewords are drawn independently and uniformly from
\begin{equation}
  \mcl{A}_n(Q)
  \triangleq
  \mcl{T}^{n_{1}}(Q)
  \times \cdots \times
  \mcl{T}^{n_{K}}(Q).
  \label{eq:ordinary-ensemble}
\end{equation}
Every ordinary codeword therefore has composition $Q$ within each
subblock and, consequently, composition $Q$ overall.
Figure~\ref{fig:code-construction} illustrates the construction.

\begin{figure}[t]
\centering
\resizebox{\columnwidth}{!}{%
\begingroup
\definecolor{symbolone}{RGB}{216,232,245}
\definecolor{symboltwo}{RGB}{251,232,193}
\definecolor{symbolthree}{RGB}{218,233,217}
\newcommand{\drawsymbols}[3]{%
  \foreach[count=\i from 0] \s in {#3}{%
    \ifcase\s
      \or\def\cellcolor{symbolone}%
      \or\def\cellcolor{symboltwo}%
      \or\def\cellcolor{symbolthree}\fi
    \pgfmathsetmacro{\cellx}{#1+0.30*\i}
    \draw[fill=\cellcolor,draw=black!40,line width=0.25pt]
      (\cellx,#2) rectangle ++(0.30,0.42);
    \node at (\cellx+0.15,#2+0.21) {$\s$};
  }%
}
\begin{tikzpicture}[x=1cm,y=1cm,
  font=\fontsize{8}{9.5}\selectfont]
  \path[use as bounding box] (0,0) rectangle (7.70,1.66);

  \foreach \cx/\s in {2.55/1,4.63/2,6.71/3}
    \node at (\cx,1.48) {Subblock $\s$};

  \node[anchor=east] at (1.45,1.04) {Special $\bs x_1$};
  \drawsymbols{1.65}{0.83}{1,1,1,1,1,1}
  \drawsymbols{3.73}{0.83}{2,2,2,2,2,2}
  \drawsymbols{5.81}{0.83}{3,3,3,3,3,3}

  \node[anchor=east] at (1.45,0.31) {Ordinary $\bs x_m$};
  \drawsymbols{1.65}{0.10}{1,2,3,2,1,3}
  \drawsymbols{3.73}{0.10}{3,1,2,1,3,2}
  \drawsymbols{5.81}{0.10}{2,3,1,3,2,1}
\end{tikzpicture}
\endgroup
}
\caption{Example with $n=18$, $K=3$, and
$\alpha=Q=(1/3,1/3,1/3)$.
Each special subblock repeats one symbol, while each ordinary subblock
contains two copies of every symbol.}
\label{fig:code-construction}
\end{figure}

To see why this is useful, consider the output type in each subblock.
When an ordinary message is transmitted, every subblock has input
composition $Q$, so its output type concentrates around the same
distribution $P_Y=(QW)_Y$.
When the special message is transmitted, the output type in subblock
$x$ instead concentrates around $W(\cdot| x)$, which may differ
across subblocks.
Thus, the decoder can test whether the subblock output types are close
to one another without knowing the channel.
This motivates the decoding rule.

For $\bs y\in\mcl{Y}^n$, let $\hat{P}_{\bs y}^x$ denote the output type in block $x$. Choose $\delta_n\downarrow0$ such that $n\delta_n^2\to\infty$, and define
\begin{equation}
	 \mcl{G}_n\triangleq\left\{\bs y:  \max_{x\neq x'}\big\|\hat{P}_{\bs y}^x-\hat{P}_{\bs y}^{x'}\big\|_\infty>\delta_n \right\},  \label{eq:Gn}
\end{equation}
where the maximum is taken over the nonempty subblocks.
For a fixed threshold $\Delta\ge0$, the decoder declares the special message, $\hat M=1$, whenever
\begin{equation}
 \bs y\in\mcl{G}_n
 \quad\text{or}\quad
 \max_{m\ge2} \, I(\hat{P}_{\bs x_m\bs y})<\Delta.
 \label{eq:decoder}
\end{equation}
Otherwise, it applies MMI decoding among the ordinary messages, selecting the message whose codeword has the largest empirical mutual information with the received sequence.
The second condition provides a rejection rule for MMI decoding: if all ordinary codewords have empirical mutual information below $\Delta$, the decoder declares the special message.


\section{Main Results}

For $\alpha\in\mcl{P}(\mcl{X})$, define
\begin{equation}
  \widehat{\mcl{J}}_\alpha
  \triangleq
  \min_{P\in\mcl{P}(\mcl{Y})}
  \sum_{x\in\mcl{X}} \alpha(x)D(P\|W(\cdot|x)),
\end{equation}
and let
\begin{equation}
  \widehat{\mcl{J}}
  \triangleq
  \max_{\alpha\in\mcl{P}(\mcl{X})}
  \widehat{\mcl{J}}_\alpha.
  \label{eq:hateopt}
\end{equation}
Since the objective is linear in $\alpha$ and convex in $P$, Sion's minimax theorem gives
\begin{align}
	\widehat{\mcl{J}}
	&=  \min_{P \in \mcl{P}(\mcl{Y})} \max_{\alpha \in \mcl{P}(\mcl{X})} \, \sum_{x\in \mcl{X}} \alpha(x) D(P \| W(\cdot|x))\\
	&=  \min_{P \in \mcl{P}(\mcl{Y})} \max_{x \in \mcl{X}} \, D(P \| W(\cdot|x)).
\end{align}
The following theorem gives an achievable special-message exponent for the proposed construction. After optimization, this exponent approaches $\widehat{\mcl{J}}$ as the rate approaches capacity.

\begin{theorem}\label{thm:main}
	Fix $Q,\alpha\in\mcl{P}(\mcl{X})$, $R<I(Q,W)$, and a threshold $\Delta<I(Q,W)$. 
	There exists a sequence of codes with a decoder that does not depend on $W$, such that the average error probability of the ordinary messages vanishes and
	\begin{equation}
		E_{\mathrm s}
		\geq
		\widehat{\mcl{J}}_{\alpha}+[\Delta-R]^+.
		\label{eq:main-thm}
	\end{equation}
\end{theorem}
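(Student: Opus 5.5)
Two error events need control: the ordinary messages, which must be decoded reliably, and the special message, whose exponent we bound. The plan uses a random-coding argument over the ensemble \eqref{eq:ordinary-ensemble}, with the special codeword $\bs x_1$ in \eqref{eq:special-codeword} held fixed. After that, a standard expurgation or averaging step extracts a deterministic code. The decoder \eqref{eq:decoder} involves only types, $\delta_n$ and $\Delta$, so it does not depend on $W$.

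\emph{Ordinary messages.} Suppose $m\ge2$ is sent. An error requires one of three events.
\begin{itemize}
\item[(a)] $\bs y\in\mcl{G}_n$. Each subblock has input composition $Q$, so conditionally on the codeword the output of subblock $x$ is a sum of $n_x$ independent symbols with mean type $(QW)_Y$. By Hoeffding or Chebyshev, $\|\hat P^x_{\bs y}-(QW)_Y\|_\infty\le\delta_n/2$ except with probability $O(e^{-c n\delta_n^2})\to0$, using $n\delta_n^2\to\infty$. A union bound over the finitely many pairs then shows $\mathbb P[\mcl{G}_n]\to0$.
\item[(b)] $I(\hat P_{\bs x_m\bs y})<\Delta$. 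Since $\Delta<I(Q,W)$ and the empirical mutual information concentrates around $I(Q,W)$, this probability vanishes.
\item[(c)] MMI confusion with another ordinary codeword. The standard constant-composition MMI analysis \cite[Ch.~10]{csiszarkorner} bounds this, because ordinary codewords are uniform on $\mcl A_n(Q)\subset\mcl T^n(Q)$. For an independent codeword and fixed $\bs y$, $\mathbb P[I(\hat P_{\bs x'\bs y})\ge t]\,\dot\le\, e^{-n t}$. This holds since $|\mcl A_n(Q)|\doteq|\mcl T^n(Q)|$: the ratio is a product of multinomials that is subexponential. The union bound then gives a vanishing error because $R<I(Q,W)$.
\end{itemize}

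\emph{Special message.} Suppose $\bs x_1$ is sent. An error requires $\bs y\notin\mcl{G}_n$ and $\max_{m\ge2}I(\hat P_{\bs x_m\bs y})\ge\Delta$. Since $\bs y\notin\mcl G_n$, all subblock output types lie within $\delta_n$ of a common $P$. Subblock $x$ is i.i.d.\ $W(\cdot|x)$, so conditioning on the subblock output types $\{P_x\}$ gives probability $\doteq\exp(-n\sum_x\alpha(x)D(P_x\|W(\cdot|x)))$ by the method of types. Given these types, $\bs y$ is uniform over the product of the subblock type classes, and the ordinary codewords are independent of $\bs y$. For each $m$, $\mathbb P[I(\hat P_{\bs x_m\bs y})\ge\Delta]\,\dot\le\,e^{-n\Delta}$. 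A union bound over $e^{nR}$ codewords, together with the trivial bound $1$, gives $\dot\le\,e^{-n[\Delta-R]^+}$. Summing over the polynomially many subblock type tuples with $\max_{x,x'}\|P_x-P_{x'}\|_\infty\le\delta_n$ and letting $\delta_n\to0$ gives, by continuity of divergence, the exponent $\min_P\sum_x\alpha(x)D(P\|W(\cdot|x))+[\Delta-R]^+=\widehat{\mcl J}_\alpha+[\Delta-R]^+$. Some care is needed when $W(\cdot|x)$ has zeros. Then $P$ is forced to lie in the supports, and lower semicontinuity of $D$ over shrinking neighbourhoods suffices.

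The main obstacle is establishing $\mathbb P[I(\hat P_{\bs x_m\bs y})\ge\Delta]\,\dot\le\, e^{-n\Delta}$ for the structured ensemble when $\bs y$ is arbitrary. The standard bound holds for a uniform draw over $\mcl T^n(Q)$. Here I would enumerate conditional types: the number of $\bs x\in\mcl T^n(Q)$ with joint type $\hat P_{\bs x\bs y}=V$ is $\dot\le\, e^{nH(V_{X|Y})}$. Dividing by $|\mcl A_n(Q)|\doteq e^{nH(Q)}$ gives $e^{-nI(V)}$, and there are only polynomially many types. A final step converts the ensemble bounds into a single code. The special-message probability is an average over the ensemble, and the ordinary-message error is an average as well. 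Markov's inequality applied to both shows that some code achieves both within constant factors, which preserves the exponent and keeps $p_{\mathrm o}\to0$.
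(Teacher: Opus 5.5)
Your proposal is correct and follows essentially the same route as the paper. It uses the same random ensemble on $\mcl A_n(Q)$ with the special codeword held fixed, and the same type-counting bound $\mathbb P[I(\hat P_{\bs X\bs y})\ge t]\,\dot\le\,e^{-nt}$ obtained from $|\mcl A_n(Q)|\doteq e^{nH(Q)}$. It bounds the special-message error by the product of the homogeneity exponent $\widehat{\mcl J}_\alpha$ (a $\delta_n\to0$ limit of the block-type minimization) and the rejection exponent $[\Delta-R]^+$, handles the ordinary messages with Hoeffding, the law of large numbers and the MMI union bound, and uses Markov's inequality to extract a single code. Your extra conditioning on the subblock output types is harmless but unnecessary, since the rejection bound already holds for every fixed $\bs y$ and the codebook is independent of $\bs Y$.
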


The threshold can be chosen arbitrarily close to $I(QW)$. Choosing a capacity-achieving $Q$,  optimizing over $\alpha$, and taking $\Delta\uparrow C$ gives the following.

\begin{corollary}
For every $R<C$ and every $\epsilon>0$, the exponent $\widehat{\mcl{J}}(R)-\epsilon$ is achievable for the special message, where
\begin{equation}
	\widehat{\mcl{J}}(R)
	\triangleq
	\widehat{\mcl{J}}+C-R.
	\label{eq:Eu}
\end{equation}
\end{corollary}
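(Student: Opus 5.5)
The corollary follows from Theorem~\ref{thm:main} by choosing its free parameters $Q$, $\alpha$ and $\Delta$; no new probabilistic argument is needed. Fix $R<C$ and $\epsilon>0$, and proceed in three steps.

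\textbf{Choice of $Q$.} Take $Q=Q^\star$ to be a capacity-achieving input distribution, so that $I(Q^\star,W)=C$. The hypothesis $R<I(Q,W)$ of Theorem~\ref{thm:main} then holds. Any threshold $\Delta<C$ is admissible.

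\textbf{Choice of $\alpha$.} Take $\alpha^\star$ to attain the maximum in \eqref{eq:hateopt}, so that $\widehat{\mcl{J}}_{\alpha^\star}=\widehat{\mcl{J}}$. The maximum is attained because of compactness and continuity, discussed below.

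\textbf{Choice of $\Delta$.} Set $\Delta=\max\{C-\epsilon,\,R\}$ if this is strictly below $C$. Since $R<C$, we may replace $\epsilon$ by $\min\{\epsilon,(C-R)/2\}$, and then simply take $\Delta=C-\epsilon$. This gives $R\le\Delta<C$, and hence $[\Delta-R]^+=C-\epsilon-R$.

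Applying Theorem~\ref{thm:main} with $(Q^\star,\alpha^\star,\Delta)$ yields codes, with a decoder that does not depend on $W$, whose ordinary-message average error probability vanishes and whose special-message exponent satisfies
\begin{equation*}
E_{\mathrm s}\ge \widehat{\mcl{J}}_{\alpha^\star}+C-\epsilon-R=\widehat{\mcl{J}}(R)-\epsilon.
\end{equation*}
This is the claim.

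\textbf{Remaining subtleties.}

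\emph{Attainment of the maximum over $\alpha$.} The function $\alpha\mapsto\widehat{\mcl{J}}_\alpha$ is a minimum of functions that are linear in $\alpha$, so it is concave and upper semicontinuous on the simplex. Its supremum is therefore attained. The infinite case needs separate treatment. If some $D(P\|W(\cdot|x))=\infty$, the value may be $+\infty$; in that case an exponent $\ge L$ is achievable for every finite $L$. Alternatively, one can pick $\alpha$ with $\widehat{\mcl{J}}_\alpha\ge\widehat{\mcl{J}}-\epsilon/2$ and take $\Delta=C-\epsilon/2$. This variant also avoids relying on exact attainment.

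\emph{Integrality.} The construction assumes that the quantities $n\alpha(x)$ and $n\alpha(x)Q(a)$ are integers. The paper states that type approximation removes this restriction without affecting the exponents, so irrational $Q^\star$ or $\alpha^\star$ are allowed.

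The only point needing real care is that the theorem requires the strict inequality $\Delta<I(Q,W)$. This is why the corollary achieves $\widehat{\mcl{J}}(R)-\epsilon$ rather than $\widehat{\mcl{J}}(R)$ itself.
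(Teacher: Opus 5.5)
Your proposal is correct and follows the paper's own argument: apply Theorem~\ref{thm:main} with a capacity-achieving $Q$, a maximizing $\alpha$, and a threshold just below $C$ (you take $\Delta=C-\epsilon$ after shrinking $\epsilon$ so that $\Delta>R$), which gives $E_{\mathrm s}\ge\widehat{\mcl{J}}+C-R-\epsilon$. Your remarks on attainment of the maximum over $\alpha$, and on the strict inequality $\Delta<I(Q,W)$ being the source of the $\epsilon$, are accurate refinements of points the paper leaves implicit.
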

The optimized expression \eqref{eq:Eu} allows the design parameters to be selected for the channel, but the decoding metric itself never uses the transition probabilities.

We next compare the exponent in~\eqref{eq:Eu} with the natural equal error protection baseline.
 Even without using any UEP structure, one can take an ordinary constant-composition code and simply designate one of its codewords as the special message. 
For a fixed input distribution $Q$, let
\begin{equation}
	 E_{\mathrm{r}}(R,Q)\triangleq  \, \min_{V_{Y|X}} \, D(V\|W|Q)+[I(QV)-R]^+
	 \label{eq:ErQ}
\end{equation}
be the constant-composition random-coding exponent. This exponent is universally attained by MMI decoding~\cite[Ch.~10]{csiszarkorner}.
Therefore, by designating one codeword as special,
the special message can attain $E_{\mathrm{r}}(R)\triangleq\max_Q E_{\mathrm{r}}(R,Q)$ using a universal decoder, while the average error probability of the remaining messages vanishes.
The following proposition shows that the exponent achieved by our construction always dominates this equal error protection baseline.
\begin{proposition}\label{prop-gain}
	For every $R<C$,
\begin{equation}
	 E_{\mathrm{r}}(R)\le C-R\le  \widehat{\mcl{J}}(R).
 \label{eq:random-baseline}
\end{equation}
\end{proposition}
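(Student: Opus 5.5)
The proof splits into the two inequalities in \eqref{eq:random-baseline}, treated separately. The right-hand one is immediate from the definitions: by \eqref{eq:Eu}, $\widehat{\mcl{J}}(R)-(C-R)=\widehat{\mcl{J}}$, so it suffices to show $\widehat{\mcl{J}}\ge 0$. This holds because $\widehat{\mcl{J}}_\alpha$ is a minimum of a nonnegative combination of relative entropies, so $\widehat{\mcl{J}}_\alpha\ge0$ for every $\alpha$, and hence the maximum \eqref{eq:hateopt} is also nonnegative.

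For the left-hand inequality, I bound $E_{\mathrm r}(R,Q)$ for each fixed $Q$ by evaluating the minimization in \eqref{eq:ErQ} at the feasible choice $V=W$. Since $D(W\|W|Q)=0$, this gives
\begin{equation*}
E_{\mathrm r}(R,Q)\le [I(Q,W)-R]^+ .
\end{equation*}
Since $I(Q,W)\le C$ for every $Q$ and $R<C$, we get $[I(Q,W)-R]^+\le [C-R]^+=C-R$. Maximizing over $Q$ then yields $E_{\mathrm r}(R)\le C-R$.

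Neither step is a real obstacle. The only point needing care is the convention behind $E_{\mathrm r}(R)=\max_Q E_{\mathrm r}(R,Q)$. The bound $E_{\mathrm r}(R,Q)\le C-R$ holds uniformly in $Q$, so it survives the maximization whether or not the maximum is restricted to $Q$ with $I(Q,W)>R$. Likewise, the bound $\widehat{\mcl{J}}(R)\ge C-R$ holds for every admissible $\alpha$, so the argument does not depend on which $Q$ is used in the construction. Finally, I will note that the gain is strict whenever $\widehat{\mcl{J}}>0$, for example when the conditionals $W(\cdot|x)$ are not all identical, which explains the "strictly improves" claim.
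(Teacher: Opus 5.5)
Your proof is correct and follows the paper's argument: you evaluate \eqref{eq:ErQ} at $V=W$ to get $E_{\mathrm r}(R,Q)\le[I(QW)-R]^+\le C-R$ and maximize over $Q$, and you get the second inequality from $\widehat{\mcl{J}}(R)=\widehat{\mcl{J}}+C-R$ together with $\widehat{\mcl{J}}\ge 0$. Your appeal to nonnegativity of relative entropy to justify $\widehat{\mcl{J}}\ge 0$ spells out the one step the paper leaves implicit.
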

The advantage becomes more apparent as $R\uparrow C$: while $E_{\mathrm r}(R)\downarrow0$, the exponent $\widehat{\mcl{J}}(R)$ converges to $\widehat{\mcl{J}}$. 
Hence, whenever $\widehat{\mcl{J}}>0$, the special message retains a positive exponent even as the transmission rate approaches capacity.
The following proposition shows that this term is positive for every nontrivial channel.
\begin{theorem}\label{prop-positive}
For every DMC, if $C>0$, then
\begin{equation}
	\mcl{J}^{\star}>0
	\qquad\text{and}\qquad
	\widehat{\mcl{J}}>0.
\end{equation}
\end{theorem}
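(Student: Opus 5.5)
The plan is to prove both positivity claims by contradiction. Each claim fails only if every row $W(\cdot|x)$ coincides with a single distribution, and that forces $C=0$.

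\textbf{Second claim.} By the minimax form of $\widehat{\mcl{J}}$ derived above, $\widehat{\mcl{J}}=\min_{P}\max_{x}D(P\|W(\cdot|x))$. The minimum is attained: $\mcl{P}(\mcl{Y})$ is compact and $P\mapsto\max_x D(P\|W(\cdot|x))$ is lower semicontinuous, with value $+\infty$ allowed. Suppose $\widehat{\mcl{J}}=0$. Then some $P^\circ$ satisfies $D(P^\circ\|W(\cdot|x))=0$ for every $x$. Hence $W(\cdot|x)=P^\circ$ for all $x\in\mcl{X}$. The output is then independent of the input under every input distribution, so $I(Q,W)=0$ for all $Q$ and $C=0$. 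This contradicts $C>0$, so $\widehat{\mcl{J}}>0$.

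\textbf{First claim.} We have $\mcl{J}^\star=\max_x D(P_Y^\star\|W(\cdot|x))$. Suppose $\mcl{J}^\star=0$. Then $W(\cdot|x)=P_Y^\star$ for every $x$, which again gives $C=0$, a contradiction. As an alternative route, one can use the comparison $\mcl{J}^\star\ge\max_x D(P_Y^\star\|W(\cdot|x))\ge\widehat{\mcl{J}}$: choosing $P=P_Y^\star$ in the min-max is feasible, so $\widehat{\mcl{J}}\le\max_xD(P_Y^\star\|W(\cdot|x))=\mcl{J}^\star$. Together with the second claim this also gives $\mcl{J}^\star\ge\widehat{\mcl{J}}>0$. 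I will present the argument in this order: second claim first, then the first claim as a corollary.

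\textbf{Main obstacle.} The only delicate point is that the divergences may be infinite, and the minimum must genuinely be attained so that a value of $0$ forces equality of distributions. Lower semicontinuity of the relative entropy on the compact simplex handles this. If some divergence is $+\infty$, the positivity conclusion holds trivially.
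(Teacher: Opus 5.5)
Your proposal is correct and follows essentially the same argument as the paper: in each case a zero value forces $D(P\|W(\cdot|x))=0$ for every $x$ at a common $P$ (using attainment of the minimum for $\widehat{\mcl{J}}$), so all rows coincide and $C=0$. Two of your additions are valid extras that the paper does not spell out here: the justification of attainment via lower semicontinuity on the compact simplex, and the alternative bound $\widehat{\mcl{J}}\le\mcl{J}^\star$ obtained by taking $P=P_Y^\star$ in the min--max form.
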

Therefore, as $R\uparrow C$, our construction still gives the special message a strictly positive exponent.
The random-coding exponent is tight at high rates and vanishes at capacity. Under equal error protection, a positive exponent cannot be
guaranteed uniformly over all messages as the rate approaches capacity.
Under UEP, however, the special message can remain exponentially protected while the ordinary messages are still decoded reliably, all with a decoder that does not use the channel transition law.

\begin{proposition}\label{prop-symmetric}
For every symmetric DMC\footnote{A symmetric DMC has rows that are permutations of one another and columns that are permutations of one another.}, $\widehat{\mcl{J}} = \mcl{J}^{\star}.$
\end{proposition}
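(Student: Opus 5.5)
The plan is to show that, for a symmetric DMC, both sides equal the common value $D(U_Y\|W(\cdot|x))$, where $U_Y$ is the uniform distribution on $\mcl{Y}$. I will prove $\widehat{\mcl{J}}\le\mcl{J}^\star$ by evaluating the min–max form of $\widehat{\mcl{J}}$ at $P=U_Y$. I will prove $\widehat{\mcl{J}}\ge\mcl{J}^\star$ by choosing $\alpha$ uniform in the max–min definition~\eqref{eq:hateopt} and computing $\widehat{\mcl{J}}_\alpha$ explicitly.

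\emph{Step 1: identify $\mcl{J}^\star$.} Because the columns of $W$ are permutations of one another, every column sums to the same value $|\mcl{X}|/|\mcl{Y}|$. Hence the uniform input produces the uniform output $U_Y$. For a symmetric channel the uniform input achieves capacity, so $P_Y^\star=U_Y$. Next,
\begin{equation*}
D(U_Y\|W(\cdot|x))=-\log|\mcl{Y}|-\frac{1}{|\mcl{Y}|}\sum_{y}\log W(y|x).
\end{equation*}
The sum on the right is invariant under permuting the row $W(\cdot|x)$. Since all rows are permutations of one another, $D(U_Y\|W(\cdot|x))$ takes a common value $D_0$ for every $x$, and therefore $\mcl{J}^\star=D_0$.

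\emph{Step 2: upper bound.} Taking $P=U_Y$ in $\widehat{\mcl{J}}=\min_P\max_x D(P\|W(\cdot|x))$ gives $\widehat{\mcl{J}}\le\max_x D(U_Y\|W(\cdot|x))=D_0=\mcl{J}^\star$.

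\emph{Step 3: lower bound.} Take $\alpha$ uniform on $\mcl{X}$, so that $\widehat{\mcl{J}}\ge\widehat{\mcl{J}}_\alpha$. For a general weight vector $\alpha$, standard Lagrangian/Gibbs reasoning gives the minimizer of $\sum_x\alpha(x)D(P\|W(\cdot|x))$ over $P$ as
\begin{equation*}
P(y)\propto\prod_x W(y|x)^{\alpha(x)},
\end{equation*}
with minimum value $-\log\sum_y\prod_x W(y|x)^{\alpha(x)}$. With $\alpha$ uniform, $\prod_x W(y|x)$ is the product of the entries of column $y$. Since the columns are permutations of each other, this product does not depend on $y$. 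The minimizer is therefore $U_Y$, and so
\begin{equation*}
\widehat{\mcl{J}}_\alpha=\frac{1}{|\mcl{X}|}\sum_x D(U_Y\|W(\cdot|x))=D_0.
\end{equation*}
Combining with Step 2 gives $\widehat{\mcl{J}}=\mcl{J}^\star$.

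The main obstacle is the degenerate case in which $W$ has zero entries, where the Gibbs formula for the minimizer breaks down. I would handle it separately. If some $W(y|x)=0$, then every row and every column contains a zero, so $D(U_Y\|W(\cdot|x))=\infty$ and hence $\mcl{J}^\star=\infty$. For any $P$, choose $y$ with $P(y)>0$. Column $y$ contains a zero, so some $x$ has $W(y|x)=0$, which gives $D(P\|W(\cdot|x))=\infty$. Thus $\max_x D(P\|W(\cdot|x))=\infty$ for every $P$, so $\widehat{\mcl{J}}=\infty$ as well, and the identity still holds. In the strictly positive case, the only remaining check is the claim that the uniform input is capacity-achieving for symmetric channels, which is the standard result.
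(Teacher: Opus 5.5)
Your proof is correct. Its lower bound is the paper's argument in a different form. The paper bounds $\max_x D(P\|W(\cdot|x))$ from below by the uniform average over $x$, and uses column symmetry to rewrite that average as $D(P\|P_{\mathrm u})+\mcl{J}^\star$. This is exactly your Gibbs decomposition, with tilted distribution $P_{\mathrm u}$ and $-\log Z=\mcl{J}^\star$.

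The upper bound is where you genuinely differ. The paper obtains $\widehat{\mcl{J}}\le\mcl{J}^\star$ operationally, by appealing to the optimality of the red-alert exponent. That appeal implicitly relies on the achievability of $\widehat{\mcl{J}}(R)$ from the Corollary together with the converse of Borade et al.\ as $R\uparrow C$. You instead evaluate the min--max at $P=U_Y$. Your route is self-contained and purely analytic. It also shows that $\widehat{\mcl{J}}\le\mcl{J}^\star$ holds for every DMC (take $P=P_Y^\star$), so symmetry is needed only for the reverse inequality. Finally, you handle the zero-entry case explicitly, whereas the paper excludes it by assumption.
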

Thus, for symmetric channels, the proposed universal construction attains the optimal red-alert exponent as the rate approaches capacity.


\begin{remark}
{We assume that $W(y|x)>0$ to ensure that the divergences are finite and the maximum over $\alpha$ is attained.
The bound in Theorem~\ref{thm:main} remains valid when the channel has zero transition probabilities, provided that  terms with $\alpha(x)=0$ are omitted.
In this case, the achievable exponent may be infinite, and the maximum over $\alpha$ should be replaced by a supremum.}
\end{remark}

\section{Numerical Examples}

We now compare the optimal matched exponent $\mcl{J}(R)$, the universal exponent $\widehat{\mcl{J}}(R)$ achieved by the proposed construction, and the  random-coding exponent $E_{\mathrm r}(R)$. 
All rates and exponents are measured in bits.

We first consider a binary symmetric channel with crossover probability $\varepsilon=0.11$. 
Figure~\ref{fig-sym} shows the three exponents over the full range $0\leq R\leq C$.

\begin{figure}[H]
 \centering
\begin{tikzpicture}
\begin{axis}[
  width=7.5cm,
  ylabel={Error Exponent},
  xlabel={$R$ (bits)},
  xmin=0, xmax=0.5,
  ymin=0, ymax=3,
  grid=major, 
  minor tick num=0,
  major grid style={line width=0.3pt, draw=gray!10},  
  legend pos=north east,
  legend style={ 
    draw=black,
    fill=white,
    nodes={inner sep=2.5pt},
  },
]
\addplot [thick, blue, dashed]
  table[x=R_bits,y=J_bits] {bsc-res.dat};
\addlegendentry{$\mathcal{J}(R)$}

\addplot [thick, black, solid]
  table[x=R_bits,y=Es_u_bits] {bsc-res.dat};
\addlegendentry{$\widehat{\mcl{J}}(R)$}

\addplot [thick, red, densely dotted]
  table[x=R_bits,y=Er_bits] {bsc-res.dat};
\addlegendentry{$E_{\mathrm r}(R)$}
\end{axis}
\end{tikzpicture}
\caption{Matched, universal, and random-coding exponents for a $\mathrm{BSC}(0.11)$. 
The universal exponent dominates the random-coding exponent at every rate and remains positive as $R\uparrow C$. 
Since the channel is symmetric, $\widehat{\mcl{J}}(R)$ and $\mcl{J}(R)$ converge to the same value at capacity.}
\label{fig-sym}
\end{figure}
As expected from Proposition~\ref{prop-symmetric}, the matched and universal exponents coincide as the rate approaches capacity. In contrast, the random-coding exponent vanishes. 

We next consider the asymmetric ternary channel
\begin{equation}
W=
\begin{pmatrix}
0.77 & 0.20 & 0.03\\
0.01 & 0.98 & 0.01\\
0.14 & 0.17 & 0.69
\end{pmatrix}.
\label{eq:asymmetric-ternary}
\end{equation}
Figure~\ref{fig-asym} compares the same three exponents over $0\leq R\leq C$.

\begin{figure}[H]
 \centering
\begin{tikzpicture}
\begin{axis}[
  width=7.5cm,
  xlabel={$R$ (bits)},
  ylabel={Error Exponent},
  xmin=0, xmax=0.79,
  ymin=0, ymax=5.5,
  grid=major, 
  minor tick num=0,
  major grid style={line width=0.3pt, draw=gray!10},  
  legend pos=north east,
  legend style={ 
    draw=black,
    fill=white,
    nodes={inner sep=2.5pt},
  },
]
\addplot [thick, blue, dashed] table[x=R_bits,y=J_bits] {asym-res.dat};
\addlegendentry{$\mathcal{J}(R)$}
\addplot [thick, black, solid] table[x=R_bits,y=Es_u_bits] {asym-res.dat};
\addlegendentry{$\widehat{\mcl{J}}(R)$}
\addplot [thick, red, densely dotted] table[x=R_bits,y=Er_bits] {asym-res.dat};
\addlegendentry{$E_{\mathrm r}(R)$}

\end{axis}
\end{tikzpicture}
\caption{Matched, universal, and random-coding exponents for the asymmetric ternary channel in~\eqref{eq:asymmetric-ternary}.
The universal exponent again dominates the random-coding exponent at every rate and remains positive as $R\uparrow C$.
Unlike the symmetric case, the gap between $\widehat{\mcl{J}}(R)$ and $\mcl{J}(R)$ remains strictly positive in this limit.}

\label{fig-asym}
\end{figure}

The two examples illustrate the main advantages of the proposed construction. In both cases, $\widehat{\mcl{J}}(R)$ strictly improves on the random-coding exponent and remains positive as the rate approaches capacity. 
For the symmetric channel, the universal and matched exponents coincide as $R\uparrow C$.
Whether a different universal construction can attain the optimal matched exponent $\mcl{J}(R)$ remains open.


\section{Proofs}

\begin{proof}[Proof of Theorem~\ref{thm:main}]
Let $\bs X$ denote an ordinary random codeword, uniformly distributed over $\mcl{A}_n(Q)$. By standard type-class bounds,
\begin{equation}
    |\mcl{A}_n(Q)| = \prod_{x\in\mcl{X}}    |\mcl{T}^{n_x}(Q)|    \doteq    \exp\big(nH(Q)\big).\label{eq-to-ref-1}
\end{equation}
Moreover, since $\mcl{A}_n(Q)\subseteq\mcl{T}^n(Q)$, for every fixed $\bs y\in\mcl{Y}^n$ and $t\ge0$,
\begin{align}
    \mathbb{P}\left[I(\hat P_{\bs X\bs y})\ge t\right] &=
    \frac{    \left|    \left\{     \bs x\in\mcl{A}_n(Q):     I(\hat P_{\bs x\bs y})\ge t    \right\}    \right|}{|\mcl{A}_n(Q)|} \\
    &\le\frac{   \left|    \left\{    \bs x\in\mcl{T}^n(Q):    I(\hat P_{\bs x\bs y})\ge t    \right\}    \right|}{   |\mcl{A}_n(Q)|  }.  
    \label{eq:type-mmi-bound}
\end{align}
Note that
\begin{align}
	  \begin{split}
	  	&\left|    \left\{    \bs x\in\mcl{T}^n(Q):    I(\hat P_{\bs x\bs y})\ge t    \right\}    \right|\\
		&\,\,\, = \sum_{\substack{P_{XY}: P_X=Q, P_Y=\hat{P}_{\bs{y}}\\ I(P)\geq t }} \!\!\!\!\!\! \big|\big\{\bs{x}\in \mcl{T}^n(Q): (\bs{x}, \bs{y}) \in \mcl{T}^n(P)\big\} \big|
	  \end{split}\\
	  &\,\,\, \,\,\dot{\leq} \sum_{\substack{P_{XY}: P_X=Q, P_Y=\hat{P}_{\bs{y}}\\ I(P)\geq t }} \!\!\!\!\!\! \exp\left(nH_P(X|Y)\right)\\
	  &\,\,\, = \sum_{\substack{P_{XY}: P_X=Q, P_Y=\hat{P}_{\bs{y}}\\ I(P)\geq t }} \!\!\!\!\!\! \exp\left(n\left[H(Q) - I(P) \right]\right), \label{eq-to-ref-2}
\end{align}
where the last inequality follows from standard type counting.
Consequently, \eqref{eq-to-ref-1} and \eqref{eq-to-ref-2} imply that, for every fixed $\bs y$,
\begin{align}
    \mathbb{P}\left[I(\hat P_{\bs X\bs y})\ge t\right] \,\dot{\leq}\,\exp(-nt). \label{eq:type-mmi-bound}
\end{align}
We first consider the special message. For every fixed $\bs y$, applying the union bound to \eqref{eq:type-mmi-bound} gives
\begin{align}
    \mathbb{P}\left[
        \max_{m\ge2}\,
        I(\hat P_{\bs X_m\bs y})
        \ge\Delta
    \right] &\,\,\dot\leq\,
    \min\left\{ 1,   \exp\big(n(R-\Delta)\big)    \right\} \\
    \,&\doteq    \exp\left(-n[\Delta-R]^+\right).
    \label{eq:special-mmi-bound}
\end{align}
When the special message is transmitted, the output in block $x$ is generated according to $W(\cdot|x)$. 
For any collection of block output types $\{P_x\}$,
\begin{equation}
 \begin{split}
 	  & \mathbb{P}\big[\hat P_{\bs Y}^x=P_x \text{ for all }x\mid M=1\big]\\
	 &\qquad   \le \exp\left\{-n\sum_x\alpha(x)     D(P_x\|W(\cdot|x))\right\}.
 \end{split}
    \label{eq:proof-block-types}
\end{equation}
Hence, by the method of types,
\begin{align}
    	\begin{split}
		&\mathbb{P}\left[
        \bs Y\notin\mcl{G}_n
        \,|\,
        M=1
    \right] \\
    &\,\,\,\dot\leq
    \exp\Bigg(
    -n \!\!\!\!\!\!\!\!\!
    \min_{\substack{
        \{P_x\}_{x\in\mcl{X}}:\\
       \underset{x\neq \bar{x}}{\max}
        \|P_x-P_{\bar{x}}\|_\infty
        \le\delta_n
    }}
    \!\!\!\sum_{x}
    \alpha(x)
    D\big(P_x\|W(\cdot|x)\big)
    \Bigg).
	\end{split}
        \label{eq:homogeneity-bound}
\end{align}
Since $\delta_n\downarrow0$, the minimization in~\eqref{eq:homogeneity-bound} converges to
\begin{equation}
    \min_{P\in\mcl{P}(\mcl{Y})}
    \sum_{x\in\mcl{X}}
    \alpha(x)
    D\big(P\|W(\cdot|x)\big)
    =
    \widehat{\mcl{J}}_\alpha.
\end{equation}
Therefore,
\begin{equation}
    \mathbb{P}\left[
        \bs Y\notin\mcl{G}_n\,|\,   M=1
    \right]
    \,\dot\leq\,
    \exp\left(-n\widehat{\mcl{J}}_\alpha\right).
    \label{eq:homogeneity-exponent}
\end{equation}
An error under the special message can occur only if $\bs Y\notin\mcl{G}_n$ and at least one ordinary codeword has empirical mutual information at least $\Delta$. Combining~\eqref{eq:special-mmi-bound} and~\eqref{eq:homogeneity-exponent}, and averaging over the random codebook, gives
\begin{equation}
    \!\!\mathbb{E}\!\left[
        \mathbb{P}\big[
            \hat M\neq1
            \,\big|\,
            M=1
        \big]
    \right]
    \dot\leq
    \exp\!\left(-n\left[\widehat{\mcl{J}}_\alpha + [\Delta-R]^+ \right]
    \right).
    \label{eq:special-average-bound}
\end{equation}

We next consider an ordinary message $m\ge2$. Since every block of $\bs X_m$ has composition $Q$, the output type in every block converges to $(QW)_Y$. 
Hoeffding's inequality and the union bound give
\begin{equation}
    \mathbb{P}\big[\bs Y\in\mcl{G}_n\mid M=m\big]
    \le
    2|\mcl{Y}|\sum_x e^{-n \frac{\alpha(x)\delta_n^2}{2}}.
    \label{eq:proof-homogeneity-ordinary}
\end{equation}

Since $n\delta_n^2\to\infty$, it follows that
\begin{equation}
    \mathbb{P}\left[
        \bs Y\in\mcl{G}_n
        \,\middle|\,
        M=m
    \right]
    \longrightarrow0.
\end{equation}

Moreover, by the law of large numbers, the joint type $\hat P_{\bs X_m\bs Y}$ converges in probability to $Q\times W$. By continuity of mutual information,
\begin{equation}
    I(\hat P_{\bs X_m\bs Y})
    \longrightarrow
    I(QW)
\end{equation}
in probability. Since $\Delta<I(QW)$,
\begin{equation}
    \mathbb{P}\left[
        I(\hat P_{\bs X_m\bs Y})<\Delta
        \,\middle|\,
        M=m
    \right]
    \longrightarrow0.
\end{equation}
It remains to consider errors among the ordinary messages. 
Since $R<I(QW)$, choose $\epsilon>0$ such that
\begin{equation}
    R+\epsilon<I(QW).
\end{equation}
Consider an ordinary message $m\ge2$, and define the event
\begin{equation}
    \mcl{E}_m
    \triangleq
    \left\{
    I(\hat P_{\bs X_m\bs Y})>R+\epsilon
    \right\}.
\end{equation}
Suppose that $\mcl{E}_m$ occurs, and fix the realizations
$(\bs X_m,\bs Y)=(\bs x_m,\bs y)$. For every competing ordinary
message $\bar m\neq m$, the codeword $\bs X_{\bar m}$ is independent of
$(\bs X_m,\bs Y)$. Hence, by the previous bound in \eqref{eq:type-mmi-bound},
\begin{align}
    \begin{split}
    	&\mathbb{P}\left[
    I(\hat P_{\bs X_{\bar m}\bs y})
    \ge
    I(\hat P_{\bs x_m\bs y})
    \,|\,
    \bs X_m=\bs x_m,\bs Y=\bs y
    \right]
    \\
    &\,\,\qquad\qquad\qquad\dot{\leq}
    \exp\left(
    -nI(\hat P_{\bs x_m\bs y})
    \right)
    \end{split}    \\
    &\qquad\qquad\qquad\le
    \exp\left(-n(R+\epsilon)\right).
\end{align}
Applying the union bound over the ordinary competing messages therefore gives
\begin{align}
	\begin{split}
		 &\mathbb{P}\left[
    \exists\,\bar m\neq m:
    I(\hat P_{\bs X_{\bar m}\bs Y})
    \ge
    I(\hat P_{\bs X_m\bs Y})
    \,|\,
    \mcl{E}_m,M=m
    \right]\\
   &\qquad\qquad\qquad \dot{\leq}\,
    \exp(-n\epsilon).
	\end{split}
\end{align}
Consequently,
\begin{align}
    &\mathbb{P}\left[
    \exists\,\bar m\neq m:
    I(\hat P_{\bs X_{\bar m}\bs Y})
    \ge
    I(\hat P_{\bs X_m\bs Y})
    \,\middle|\,
    M=m
    \right]
    \nonumber\\
    &\qquad\le
    \mathbb{P}\left[\mcl{E}_m^c\mid M=m\right]
    +\exp(-n\epsilon+o(n)).
    \label{eq:ordinary-mmi-error}
\end{align}
By the convergence of $I(\hat P_{\bs X_m\bs Y})$ to $I(QW)$ and the choice
$R+\epsilon<I(QW)$,
\begin{equation}
    \!\!\!\!\mathbb{P}\left[\mcl{E}_m^c| M=m\right]
    =
    \mathbb{P}\!\left[
    I(\hat P_{\bs X_m\bs Y})\!\le\! R+\epsilon
    \,|\,
    M=m
    \right]
    \to 0.
\end{equation}
It follows from~\eqref{eq:ordinary-mmi-error} that the ensemble average probability of an MMI decoding error among the ordinary messages vanishes.

Let $p_{\mathrm o}(\mcl C)$ and $p_{\mathrm s}(\mcl C)$ denote the
ordinary and special-message error probabilities of a  codebook $\mcl C$, respectively. By Markov's inequality,
\begin{align}
    \mathbb{P}\left[
    p_{\mathrm o}(\msf C)>
    3\mathbb{E}[p_{\mathrm o}(\msf C)]
    \right]
    &\le \frac{1}{3},\\
    \mathbb{P}\left[
    p_{\mathrm s}(\msf C)>
    3\mathbb{E}[p_{\mathrm s}(\msf C)]
    \right]
    &\le \frac{1}{3}.
\end{align}
Therefore, by the union bound, there exists a deterministic codebook
for which both
\begin{equation}
    p_{\mathrm o}(\mcl C)
    \le
    3\mathbb{E}[p_{\mathrm o}(\msf C)]
\end{equation}
and
\begin{equation}
    p_{\mathrm s}(\mcl C)
    \le
    3\mathbb{E}[p_{\mathrm s}(\msf C)]
\end{equation}
hold simultaneously. Since the first expectation vanishes and the constant does not affect the exponent of the second,
there exists a deterministic codebook such that the average error probability of the ordinary messages vanishes and
\begin{equation}
    E_{\mathrm s}
    \ge
    \widehat{\mcl J}_\alpha+[\Delta-R]^+.
\end{equation}

\end{proof}

\begin{proof}[Proof of Proposition~\ref{prop-gain}]

Choosing $V=W$ in~\eqref{eq:ErQ} gives
\begin{equation}
    E_{\mathrm r}(R,Q)
    \le [I(QW)-R]^+
    \le C-R,
\end{equation}
where the last inequality follows from $I(QW)\le C$ and $R<C$.
Maximizing over $Q$ yields $E_{\mathrm r}(R)\le C-R$.
The second inequality follows from
$\widehat{\mcl{J}}(R)=\widehat{\mcl{J}}+C-R$ and
$\widehat{\mcl{J}}\geq0$.
\end{proof}

\begin{proof}[Proof of Theorem~\ref{prop-positive}]
Since $C>0$, the channel contains at least two distinct rows.
Suppose first that $\mcl{J}^{\star}=0$. Since relative entropy is nonnegative,
\begin{equation}
    D\big(P_Y^\star\|W(\cdot|x)\big)=0
\end{equation}
for every $x\in\mcl{X}$. Hence,
\begin{equation}
    W(\cdot|x)=P_Y^\star
\end{equation}
for every $x$, so all rows of the channel are identical. This implies $C=0$, which is a contradiction. Therefore,
\begin{equation}
    \mcl{J}^{\star}>0.
\end{equation}
Similarly, suppose that $\widehat{\mcl{J}}=0$. Since the minimum in
\begin{equation}
    \widehat{\mcl{J}}
    =
    \min_{P\in\mcl{P}(\mcl{Y})}
    \max_{x\in\mcl{X}}
    D\big(P\|W(\cdot|x)\big)
\end{equation}
is attained, there exists $P^\star\in\mcl{P}(\mcl{Y})$ such that
\begin{equation}
    D\big(P^\star\|W(\cdot|x)\big)=0
\end{equation}
for every $x\in\mcl{X}$. Thus, $W(\cdot|x)=P^\star$ for every $x$, again implying $C=0$. Hence, $\widehat{\mcl{J}}>0$.

\end{proof}

\begin{proof}[Proof of Proposition~\ref{prop-symmetric}]
Let $P_{\mathrm u}$ denote the uniform distribution on $\mcl{Y}$.
By symmetry, the uniform input distribution induces $P_{\mathrm u}$,
and
\begin{equation}
    \mcl{J}^\star
    =
    D(P_{\mathrm u}\|W(\cdot|x))
    \qquad\text{for every }x\in\mcl{X}.
\end{equation}
Since $\mcl{J}^\star$ is the optimal red-alert exponent,
$\widehat{\mcl{J}}\leq\mcl{J}^\star$.
Conversely, for every $P\in\mcl{P}(\mcl{Y})$, symmetry gives
\begin{align}
    \max_{x\in\mcl{X}}D(P\|W(\cdot|x))
    &\geq
    \frac{1}{|\mcl{X}|}
    \sum_{x\in\mcl{X}}D(P\|W(\cdot|x))\\
    &=
    D(P\|P_{\mathrm u})+\mcl{J}^\star\\
    &\geq
    \mcl{J}^\star.
\end{align}
Minimizing over $P$ yields
$\widehat{\mcl{J}}\geq\mcl{J}^\star$, and hence
\begin{equation}
    \widehat{\mcl{J}}=\mcl{J}^\star.
\end{equation}
\end{proof}

\section*{acknowledgement}
The proposed code construction was developed by the authors, who also carried out all mathematical analysis and proofs in this work.
OpenAI GPT-6 Astra was used to assist in checking the proofs and mathematical statements and refining the presentation of technical arguments.
It also assisted with running the simulations and obtaining the numerical results reported in Section~IV.
All AI-assisted outputs were independently verified by the authors.

{\footnotesize
\bibliographystyle{IEEEtran}
\bibliography{refs}
}

\end{document}